\documentclass[aps,superscriptaddress,twocolumn,twoside,floatfix,pra,nofootinbib,a4paper]{revtex4}
\usepackage{times}
\usepackage{epsfig}
\usepackage{amsfonts}
\usepackage{amsmath}
\usepackage{amssymb}
\usepackage{amsthm}
\usepackage{color}
\usepackage{multirow}
\usepackage[normalem]{ulem}
\newcommand{\stkout}[1]{\ifmmode\text{\sout{\ensuremath{#1}}}\else\sout{#1}\fi}
\usepackage{latexsym}
\usepackage{mathrsfs}
\usepackage{natbib}
\usepackage{verbatim}
\usepackage[T1]{fontenc}
\usepackage{float}

\usepackage{graphicx}
\usepackage{xcolor}

\usepackage[colorlinks=true,linkcolor=blue,citecolor=magenta,urlcolor=blue]{hyperref}

\DeclareMathOperator{\Tr}{tr}

\newtheorem{theorem}{Theorem}
\newtheorem{corollary}{Corollary}
\newtheorem{conjecture}{Conjecture}

\newcommand{\ket}[1]{|#1\rangle}

\newcommand{\ketbra}[2]{|#1\rangle\langle#2|}

%%%%%%%%%%%%%%%%%%%%%%%%%%%%%%%%%%%%%%%%%%%%%%%%%%%%%%%%%%%%%%%%%%%

\begin{document}
	
	%%%%%%%%%%%%%%%%%%%%%%%%%%%%%%%%%%%%%%%%%%%%%%%%%%%%%%%%%%%%%%%%%%%
	
\title{Measurement incompatibility and steering are necessary and sufficient for operational contextuality}

	%%%%%%%%%%%%%%%%%%%%%%%%%%%%%%%%%%%%%%%%%%%%%%%%%%%%%%%%%%%%%%%%%%%

\author{Armin Tavakoli}\thanks{A. T. and R. U.  contributed equally to this work.}
\affiliation{D\'epartement de Physique Appliqu\'ee, Universit\'e de Gen\`eve, CH-1211 Gen\`eve, Switzerland}

\author{Roope Uola}\thanks{A. T. and R. U.  contributed equally to this work.}
\affiliation{D\'epartement de Physique Appliqu\'ee, Universit\'e de Gen\`eve, CH-1211 Gen\`eve, Switzerland}

	%%%%%%%%%%%%%%%%%%%%%%%%%%%%%%%%%%%%%%%%%%%%%%%%%%%%%%%%%%%%%%%%%%%
\begin{abstract}
Contextuality is a signature of operational nonclassicality in the outcome statistics of an experiment. This notion of nonclassicality applies to a breadth of physical phenomena. Here, we establish its relation to two fundamental nonclassical entities in quantum theory; measurement incompatibility and steering. We show that each is necessary and sufficient the failure of operational contextuality. We exploit the established connection to contextuality to provide a novel approach to problems in measurement incompatibility and steering. 
\end{abstract}
	
	%%%%%%%%%%%%%%%%%%%%%%%%%%%%%%%%%%%%%%%%%%%%%%%%%%%%%%%%%%%%%%%%%%%
		
\maketitle
	
	%%%%%%%%%%%%%%%%%%%%%%%%%%%%%%%%%%%%%%%%%%%%%%%%%%%%%%%%%%%%%%%%%%%
	
\section{Introduction}
The nonclassical nature of quantum theory has a variety of different manifestations. On the one hand, quantum theory postulates theoretical entities with properties that lack a counterpart in classical physics. On the other hand, the nonclassicality of quantum theory is also present on the observable level, i.e. in the outcome statistics of experiments. Evidently, if an experiment takes the reality of the quantum formalism for granted, every nonclassical entity of quantum theory can be experimentally detected.  However, if the assumption of nature being quantum is dropped, the outcome statistics can frequently  be reproduced with some classical model. Matters become more interesting when the nonclassicality of outcome statistics can be operationally determined in the spirit of device-independence \cite{Focus}. That is, in experiments that demonstrate nonclassicality while making weak assumptions on the underlying physical nature.

The strongest form of operational inference is encountered in tests of Bell inequalities \cite{BellReview}. These experiments statistically analyse the correlations between the outcomes of measurements performed in space-like separated events. If the correlations violate a Bell inequality, it follows that the outcome statistics cannot be explained by \textit{any} classical (local hidden variable) theory. Famously, by sharing entangled states and performing incompatible measurements that together steer the remote partner system, quantum theory can  violate these inequalities and therefore provide an unequivocal demonstration of nonclassicality \cite{Bell, CHSH}. Surprisingly however, not all incompatible measurements, nor all steerable ensembles enable Bell inequality violations \cite{Steering, Tamas, Flavien}. This motivates the question: is nonclassicality at the level of theoretical entities both necessary and sufficient for some form of operational nonclassicality?

We show that two fundamental theoretical  entities in quantum theory always enable a proof of operational nonclassicality. Specifically, we consider the incompatibility of quantum measurements and the ability to steer another system by local measurements and classical communication (a feature of quantum theory originating from Schr\"odinger's remarks \cite{Schrodinger} on the Einstein-Podolsky-Rosen paradox \cite{EPR}). These two features of quantum theory are thoroughly researched, see e.g.~Refs~\cite{Teiko, QtheoryMeas, OpQPhys, QuantumMeasurement} and Refs~\cite{SteeringReview1, SteeringReview2} respectively. We establish their relation to operational nonclassicality as provided by outcome statistics that demonstrate \textit{contextuality}. As originally introduced by Bell, Kochen and Spekker \cite{BellContext, KochenSpekker}, contextuality is a property of projective measurements in quantum theory. However, the concept has seen a generalisation that applies on the level of ontological models, and therefore to general operational theories used to model outcome statistics \cite{Sp05}. Briefly, contextuality is the impossibility of building an ontological model  that ascribes equivalent physical representation to operationally indistinguishable physical procedures. Said impossibility is referred to as preparation (measurement) contextuality when the procedures correspond to ways of preparating (measuring) a system. %Preparation and measurement contextuality are witnessed in the outcome statistics of experiments. 
Contextuality (in its broad sense) in quantum theory is closely related to e.g.~advantages in quantum computation \cite{Raussendorf, Delfosse, Magic, Vega}, advantages in particular communication tasks \cite{POM, HT17, Saha, SS18}, quantum zero-error communication \cite{ZeroError} and anomalous weak values \cite{Pusey}.

This work has two main results. Firstly, we show that incompatibility of quantum measurements is necessary and sufficient for enabling a demonstration of nonclassicality in terms of preparation contextual outcome statistics. Secondly, we show that every no-signaling ensemble of states is steerable if and only if it enables a demonstration of nonclassicality in terms of preparation and measurement contextual outcome statistics. These results allow us to view problems in joint measurability and steering through the lens of contextuality. We exploit this to present a family of noncontextuality inequalities and provide numerical evidence that these are necessary and sufficient conditions for certifying the incompatibility of any set of binary qubit observables, and that they also constitute optimal tests of the steerability of a pair of qubits in a singlet state subject to noisy environments. Such applications also make possible experimental tests of measurement incompatibility and steerability  based on  operational contextuality.

\section{Preliminaries}
We begin with a brief summary of contextuality (in the generalised sense of Ref~\cite{Sp05}), measurement incompatibility and quantum steering.

\subsection{Contextuality}
Contextuality in an operational theory  \cite{Sp05} is developed within the framework of ontological models \cite{Harrigan}. An ontological model describes a preparation procedure $\mathbf{P}$ by an ontic state (hidden variable) $\lambda$ with a distribution $p(\lambda|\mathbf{P})$. When a measurement procedure $\mathbf{M}$ is applied, the ontological model determines the probability of an outcome $b$ with some response function $p(b|\mathbf{M},\lambda)$. Therefore, the outcome statistics reads
\begin{equation}\label{prep}
p(b|\mathbf{P},\mathbf{M})=\sum_\lambda p(\lambda|\mathbf{P})p(b|\mathbf{M},\lambda).
\end{equation}
Furthermore, ontological models are linear in the sense that convex combinations of preparation and measurement procedures are represented by convex sums of the relevant ontic state distributions and response functions respectively. See Ref.~\cite{Sp05} for a discussion of this property.

In quantum theory preparations are represented by density matrices $\rho$ and measurements are positive operator-valued measures (POVMs) $M=\{M_b\}$, i.e. $M_b\geq 0$ and $\sum_b M_b=\openone$. Outcome statistics is given by the Born rule  $p(b|\rho,M)=\Tr\left[\rho M_b\right]$. A state (measurement) can be realised in as many ways as it can be decomposed into mixtures of other states (decomposed into element-wise mixtures or coarse-grainings of other measurements). Different ways of preparing the same state (performing the same measurement) are called contexts for $\rho$ ($M$). An ontological model is said to be preparation noncontextual if the ontic state distribution is independent of the context, i.e. if $p(\lambda\lvert \mathbf{P})=p(\lambda\lvert \rho)$.  Similarly, an ontological model is said to be measurement noncontextual if the response functions are context-independent, i.e. if $p(b|\mathbf{M},\lambda)=p(b\lvert M, \lambda)$. These notions embody the idea that if two laboratory procedures are indistinguishable, then they are also indistinguishable on the level of ontic states. We remark that in order to ensure that two procedures truly are indistinguishable, one needs to be able to perform measurements (prepare states) that span the measurement (state) space.  
In contrast, if outcome statistics cannot be reproduced with any preparation (measurement) noncontextual model, it is said to be preparation (measurement) contextual. See Ref~\cite{Sp05} for a detailed discussion of operational contextuality.

\subsection{Measurement incompatibility }
Measurement incompatibility \cite{Teiko, QtheoryMeas, OpQPhys, QuantumMeasurement} is the impossibility of jointly measuring a set of (at least two) POVMs by employing only a single  measurement and classical post-processing of its outcomes. More precisely, let $\{A_{a|x}\}$ be a set of POVMs, with $a$ labelling the outcome and $x$ labelling the measurement. The set is called compatible (jointly measurable) if there exists a POVM $\{G_\lambda\}$ which allows us to recover the set  $\{A_{a\lvert x}\}$ via some post-processing probability distribution $p(a|x,\lambda)$;
\begin{equation}\label{JM}
A_{a|x}=\sum_{\lambda}p(a|x,\lambda)G_\lambda.
\end{equation}
If such a model does not exist, the set $\{A_{a|x}\}$ is called incompatible (not jointly measurable). This extends the textbook concept of commutativity in the sense that mutually commuting POVMs are jointly measurable, but the converse does not hold in general. The converse holds, however, for textbook observables, i.e. projective measurements. It is worth noting that joint measurability can be characterised as the existence of a common Naimark dilation in which the projective measurements commute.

\subsection{Steering}
Steering \cite{Steering} is a qualitative property of some entangled quantum states regarding the set of ensembles that can be remotely prepared with local measurements and classical communication. Specifically, one considers a pair of entangled systems in state $\rho$ and performs a set of measurements $\{A_{a|x}\}$ on the first system. Given the choice of $x$, this renders the second system in the state $\rho_{a|x}=\Tr_\text{A}\left[A_{a|x}\otimes \openone \rho\right]/\Tr\left[A_{a|x}\otimes \openone \rho\right]$ with probability $p(a|x)=\Tr\left[A_{a|x}\otimes \openone \rho\right]$. It is important to underline the fact that classical communication is necessary for the steered party to be able to distinguish between different local states $\rho_{a|x}$. These local states can be effectively described with a the set of unnormalised states (called an assemblage) $\{\sigma_{a|x}\}$ where $\sigma_{a|x}=\Tr_\text{A}\left[A_{a|x}\otimes \openone \rho\right]$. Such assemblages are no-signaling, i.e. $\sum_a\sigma_{a|x}=\sum_a\sigma_{a|x'}$. In this work all assemblages are assumed to be no-signaling. We remark that the Gisin-Hughston-Josza-Wootters theorem \cite{Gisin, HJW} ensures that every assemblage can be prepared by a distant party's local measurements (supported by classical communication) on a properly chosen entangled state.  An assemblage is said to be unsteerable if it admits a so-called local hidden state model. Such models use $(a,x)$ as information towards a post-processing $p(a|x,\lambda)$ of a set of local states $\rho_\lambda$ appearing with probability $p(\lambda)$ to explain the assemblage $\{\sigma_{a|x}\}$. Hence, if the state is unsteerable, it can be written as
\begin{equation}\label{LHS}
\sigma_{a|x}=\sum_{\lambda} p(\lambda)p(a|x,\lambda)\rho_\lambda.
\end{equation}
If no local hidden state model is possible, the assemblage is called steerable.

\section{Main results}
We begin by proving a one-to-one relation between  measurement incompatibility and preparation contextuality. 
\begin{theorem}\label{thm1}
	A set of measurements is compatible if and only if their statistics admit a preparation noncontextual model for all states.
\end{theorem}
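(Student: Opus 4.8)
The plan is to prove both directions by exhibiting explicit constructions that translate between a joint measurement (Eq.~\eqref{JM}) and a preparation noncontextual model (Eq.~\eqref{prep}).

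For the "only if" direction, suppose the set $\{A_{a|x}\}$ is compatible, so there is a parent POVM $\{G_\lambda\}$ with $A_{a|x}=\sum_\lambda p(a|x,\lambda)G_\lambda$. I would build an ontological model in which, for any state $\rho$, the ontic-state distribution is $p(\lambda|\rho)=\Tr[\rho\,G_\lambda]$ — note this depends only on $\rho$, not on how $\rho$ was prepared, which is exactly preparation noncontextuality. The response function for the measurement labelled $x$ is taken to be the post-processing $p(a|x,\lambda)$. Then $\sum_\lambda p(\lambda|\rho)\,p(a|x,\lambda)=\Tr[\rho\sum_\lambda p(a|x,\lambda)G_\lambda]=\Tr[\rho\,A_{a|x}]$, reproducing the Born-rule statistics. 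One must also check the linearity/consistency requirements of ontological models — that convex mixtures of preparations map to convex mixtures of the $p(\lambda|\rho)$ — which is immediate since $\rho\mapsto\Tr[\rho\,G_\lambda]$ is linear.

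For the "if" direction (the one I expect to be the main obstacle), assume the statistics of $\{A_{a|x}\}$ on all states admit a preparation noncontextual model. The key move is to apply the noncontextuality constraint to cleverly chosen pairs of preparation procedures that are operationally equivalent. Concretely, for each $x$ the decomposition $\openone=\sum_a A_{a|x}$ (after normalising, $\rho_{a|x}\propto$ ... ) gives, via the maximally mixed state or via steering-type preparations, two different ways of preparing the same average state indexed by different $x$; preparation noncontextuality forces the associated ontic distributions to coincide, and this coincidence is precisely what lets one define $G_\lambda$ from the model's data. Following the Spekkens-type argument, I would extract $G_\lambda$ (up to normalisation) from $p(\lambda|\rho)$ evaluated on a spanning set of states, set $p(a|x,\lambda)$ equal to the model's response functions, and verify that $A_{a|x}=\sum_\lambda p(a|x,\lambda)G_\lambda$ holds as an operator identity by checking it against all $\rho$ (which suffices because states span the dual space). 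The subtle points will be (i) ensuring the extracted $\{G_\lambda\}$ is a bona fide POVM — positivity and normalisation — which should follow from positivity of probabilities and the noncontextual identification of the $x$-independent marginal, and (ii) handling the fact that the ontological model is only guaranteed to exist, so all manipulations must be done at the level of the abstract $p(\lambda|\rho)$ and $p(b|M,\lambda)$ rather than any concrete representation.

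I would present the argument so that the structural parallel between Eq.~\eqref{JM} and Eq.~\eqref{prep} is manifest: the parent POVM element $G_\lambda$ plays the role of the ontic state, $\Tr[\rho G_\lambda]$ the role of $p(\lambda|\rho)$, and the classical post-processing $p(a|x,\lambda)$ the role of the response function. The hard part is the converse direction's use of the full strength of preparation noncontextuality — in particular invoking operational equivalences among preparations obtained by mixing the (normalised) post-measurement or steered states — and making sure the spanning/tomographic completeness assumption flagged in the Preliminaries is what licenses passing from equality of statistics on all states to equality of operators.
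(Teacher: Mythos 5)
Your proposal is correct and follows essentially the same route as the paper: the easy direction is the identical direct construction $p(\lambda|\rho)=\Tr[\rho G_\lambda]$, and your hard direction --- extracting $G_\lambda$ from the affine, context-independent map $\rho\mapsto p(\lambda|\rho)$ via a spanning set of states, with positivity and normalisation inherited from the probabilities --- is exactly the paper's single appeal to the Riesz representation theorem, unpacked by hand. The only cosmetic difference is your emphasis on specific operational equivalences from $\sum_a A_{a|x}=\openone$, which is not needed beyond the general fact that preparation noncontextuality plus the linearity of ontological models makes $p(\lambda|\cdot)$ a well-defined affine functional on state space.
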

\begin{proof}
Assume that the set of POVMs $\{A_{a|x}\}$ when applied to any quantum state $\rho$ returns outcome statistics that is preparation noncontextual. We denote the set of preparation procedures (contexts) in which $\rho$ can be prepared by $\mathcal{P}_\rho$. Then, using the label $x$ to denote the measurement procedure, it holds that  
\begin{equation}\label{step1}
\forall \mathbf{P}\in\mathcal{P}_\rho: p(a\lvert x,\mathbf{P})=\sum_\lambda p(\lambda|\rho)p(a|x,\lambda).
\end{equation}
For each $\lambda$, the object $p(\lambda\lvert \rho)$ is a convexity-preserving map from the space of quantum states to the interval $[0,1]$.  The Riesz representation theorem \cite{QuantumMeasurement, MLQT} asserts that such maps can be written as an inner product $p(\lambda|\rho)=\Tr\left[G_\lambda \rho\right]$ for some unique operator $0\leq G_\lambda\leq \openone$. Moreover, since $\forall \rho: \sum_{\lambda}p(\lambda|\rho)=1$, it follows that $\sum_\lambda G_\lambda=\openone$. Inserting this into Eq.~\eqref{step1}, we have
\begin{equation}\label{jmstat}
\forall \mathbf{P}\in\mathcal{P}_\rho: p(a\lvert x,\mathbf{P})=\sum_\lambda p(a|x,\lambda) \Tr\left[G_\lambda \rho\right].
\end{equation}
We have recovered the outcome statistics obtained from measuring $\rho$ with a compatible set of POVMs. 

Conversely, assume that $\{A_{a|x}\}$ is a compatible set of POVMs. Then, the statistics obtained from measuring any state $\rho$ prepared with a procedure $\mathbf{P}$ is given by Eq.~\eqref{jmstat}. By defining $p(\lambda|\rho)=\Tr\left[G_\lambda \rho\right]$, we recover the definition of outcome statistics being preparation noncontextual.
\end{proof}

It is interesting to note that tests of preparation contextuality  can be formulated as communication tasks between two separated parties, in which the receiver is kept oblivious about parts of the sender's input \cite{POM, Saha, HT17}. Such obliviousness corresponds to different contexts for the states. From theorem~\ref{thm1}, we can therefore infer that:
\begin{corollary}
Every set of incompatible measurements enables a quantum-over-classical advantage in a communication task.
\end{corollary}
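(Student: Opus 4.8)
The plan is to combine Theorem~\ref{thm1} with the known reformulation of preparation contextuality as an oblivious communication game \cite{POM, Saha, HT17}. First I would invoke Theorem~\ref{thm1}: if the set $\{A_{a|x}\}$ is incompatible, then there exists at least one quantum state on which these measurements produce outcome statistics that cannot be reproduced by any preparation noncontextual model. The goal is then to repackage this impossibility into a concrete operational task whose classical figure of merit is bounded exactly by the preparation noncontextual value, so that the quantum realisation beats it.

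Second, I would set up the communication scenario explicitly. A sender receives a classical input, encodes it into a preparation procedure $\mathbf{P}$, and sends the corresponding system to a receiver who, on input $x$, performs a measurement and outputs $a$; the \emph{obliviousness constraint} demands that certain prescribed coarse-grainings (convex mixtures) of the sender's inputs be operationally indistinguishable, i.e. correspond to the same density matrix. Each such indistinguishable mixture is precisely a context for that state, so an ontological model honouring the obliviousness constraint is, by definition, preparation noncontextual, and conversely. Under this identification the ontic variable $\lambda$ plays the role of the sender's (effectively classical) message, and the game's success probability becomes a fixed linear functional of the statistics $p(a|x,\mathbf{P})$.

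Third, from the failure of every preparation noncontextual model guaranteed by Theorem~\ref{thm1}, a separating-hyperplane (equivalently, linear-programming duality) argument produces a linear functional --- a noncontextuality inequality --- that is respected by all preparation noncontextual strategies, hence by all obliviousness-respecting classical strategies, yet is violated by the quantum implementation using $\{A_{a|x}\}$ on the witnessing state. Reading this functional as the payoff of the communication game yields the asserted quantum-over-classical advantage, and Corollary~1 follows.

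The main obstacle I expect is the construction of an explicit, finite ensemble of preparations equipped with operational equivalences (the obliviousness constraint) that simultaneously (i) can be implemented with the witnessing state and the given incompatible measurements, and (ii) has its classical value captured exactly by bounded-message preparation noncontextual models; one must check that the chosen mixtures are genuinely equal as density operators and that the contextuality witnessed in the full state space survives the restriction to this finite scenario. Securing a strict, robust gap rather than a merely non-strict inequality is the other delicate point.
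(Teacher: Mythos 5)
Your proposal follows essentially the same route as the paper: invoke Theorem~\ref{thm1} to obtain a state on which the incompatible measurements yield preparation contextual statistics, and then use the known reformulation of preparation contextuality tests as oblivious communication tasks \cite{POM, Saha, HT17} to convert that contextuality into a quantum-over-classical advantage. The paper states this inference in one sentence and delegates the finite-scenario construction and the noncontextuality-inequality (separating-hyperplane) step to the cited references, whereas you spell those steps out and correctly flag the finite-ensemble construction as the delicate point.
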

We remark that the advantages of all incompatible sets of  measurements have recently been shown in various measurement-device-independent communication tasks \cite{Paul, RoopeN, Carmeli, Leo}.

In a spirit similar to that of theorem~\ref{thm1}, we prove a one-to-one relation between steering and measurement contextuality.
\begin{theorem}\label{thm2}
	An assemblage is unsteerable if and only if its statistics admits a preparation and measurement noncontextual model for all measurements.
\end{theorem}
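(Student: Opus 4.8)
The plan is to prove both directions by exploiting the structure of assemblages: an assemblage $\{\sigma_{a|x}\}$ gives outcome statistics on the steered system via $p(b|\mathbf{M},a,x) = \Tr[\sigma_{a|x} M_b]/\Tr[\sigma_{a|x}]$, and the key observation is that the no-signaling condition $\sum_a \sigma_{a|x} = \sum_a \sigma_{a|x'} =: \rho$ means that the marginal (reduced) state $\rho$ on the steered side is the same regardless of $x$. So the different values of $x$ (together with the communicated outcome $a$) constitute genuinely distinct \emph{preparation contexts} for the same state $\rho$ — but also the statistics of a measurement $M$ on $\sigma_{a|x}$ can be read, after relabelling, as different \emph{measurement contexts}. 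The proof will track how a local hidden state model maps to a jointly preparation- and measurement-noncontextual ontological model and conversely.

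First I would handle the "only if" direction. Assume the assemblage is unsteerable, so by Eq.~\eqref{LHS} we have $\sigma_{a|x} = \sum_\lambda p(\lambda) p(a|x,\lambda)\rho_\lambda$. For an arbitrary measurement $M=\{M_b\}$ on the steered system, the outcome statistics is $p(b,a|x,\mathbf{M}) = \sum_\lambda p(\lambda)p(a|x,\lambda)\Tr[\rho_\lambda M_b]$. I would build an ontological model with ontic variable $\lambda$, preparation distribution $p(\lambda|\mathbf{P}) = p(\lambda)$ depending only on which "input" $(a,x)$ is being post-selected — but crucially, since the marginal state $\rho = \sum_a \sigma_{a|x}$ is independent of $x$, the operationally-defined preparation is noncontextual: the same $\rho_\lambda$ decomposition serves all $x$, so $p(\lambda|\rho)$ is well-defined. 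The response function would be $p(b|\mathbf{M},\lambda) = \Tr[\rho_\lambda M_b]$, manifestly independent of how $M$ is realised as a coarse-graining, hence measurement noncontextual. One then checks the relevant convexity/linearity requirements of an ontological model are met, and that operationally indistinguishable procedures (same assemblage element, same measurement effect) are indeed assigned the same representation.

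For the converse, suppose the assemblage's statistics admits a preparation- and measurement-noncontextual model for all measurements. As in the proof of Theorem~\ref{thm1}, measurement noncontextuality plus the Riesz/Gleason-type argument lets me extract, for each ontic value $\lambda$, a density operator $\rho_\lambda$ such that the response function is $p(b|\mathbf{M},\lambda) = \Tr[\rho_\lambda M_b]$ for all effects $M_b$ (using that measurements span the effect space). Preparation noncontextuality then forces the ontic distribution associated with the context $(a,x)$ to factor as $p(\lambda|\mathbf{P}_{a,x}) = p(a|x,\lambda)p(\lambda)/p(a|x)$ in a way consistent across all $x$ — again because the no-signaling marginal $\rho$ is a single state with a single, $x$-independent ontic representation, so the sub-normalised pieces recombine. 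Reassembling, $\sigma_{a|x} = p(a|x)\sum_\lambda p(\lambda|\mathbf{P}_{a,x})\rho_\lambda = \sum_\lambda p(\lambda)p(a|x,\lambda)\rho_\lambda$, which is exactly the LHS model of Eq.~\eqref{LHS}, so the assemblage is unsteerable.

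The main obstacle I anticipate is the bookkeeping needed to treat preparation and measurement contextuality \emph{simultaneously} and to pin down precisely which operational equivalences are in play: unlike Theorem~\ref{thm1}, here both the preparation side (the assemblage elements sharing a fixed no-signaling marginal) and the measurement side (arbitrary POVMs and their coarse-grainings on the steered system) must be controlled at once, and one must argue carefully that the convexity-preserving map $\rho \mapsto p(\lambda|\rho)$ extracted from preparation noncontextuality is consistent with the family of sub-normalised conditional states $\{\sigma_{a|x}\}$ for \emph{every} $x$. Making the "span the space" caveat rigorous — that enough states and measurements are available to certify the operational equivalences — is the delicate point, but it is handled exactly as the analogous remark after the statement of the contextuality framework in the Preliminaries.
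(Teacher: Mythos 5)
Your proposal is correct and follows essentially the same route as the paper's proof: both directions hinge on (i) the Gleason--Busch characterisation of the response function as $p(b\lvert M,\lambda)=\Tr[\rho_\lambda M_b]$ for a unique state $\rho_\lambda$, and (ii) Bayes' rule together with the independence of $x$ and $\lambda$ (enforced by no-signaling/preparation noncontextuality) to pass between $p(\lambda\lvert a,x)$ and $p(\lambda)p(a\lvert x,\lambda)/p(a\lvert x)$, which converts the noncontextual model into the local-hidden-state form of Eq.~\eqref{LHS} and back. The only difference is presentational (you treat the two directions in the opposite order and dwell a bit longer on the ``span the space'' caveat, which the paper relegates to the Preliminaries).
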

\begin{proof}
Assume that the assemblage $\{\sigma_{a\lvert x}\}$ when measured with any POVM $M$ returns outcome statistics that is measurement noncontextual. We denote the set of measurement procedures (contexts) in which $M$ can be realised by $\mathcal{M}_M$. Due to assemblages being no-signaling, we have that $p(a,b\lvert x,\mathbf{M})=p(b\lvert a,x,\mathbf{M})p(a\lvert x)$ and that  	
\begin{multline}\label{step2}
\forall \mathbf{M}\in \mathcal{M}_M: p(b\lvert a,x,\mathbf{M})p(a\lvert x)=\\
p(a\lvert x) \sum_{\lambda}p\left(\lambda\lvert a,x\right)p(b\lvert M,\lambda),
\end{multline}
where $(a,x)$ labels the preparation procedure. For every $\lambda$, the object $p(b\lvert M,\lambda)$ is a map from the space of POVMs to the space of probability distributions. Such maps are characterised by  the works of Gleason \cite{Gleason} and Busch \cite{BuschsTheorem}. The Gleason-Busch theorem asserts that $p(b\lvert M,\lambda)=\Tr\left[\rho_\lambda M_b\right]$ for some unique state $\rho_\lambda$. Inserting this into Eq.~\eqref{step2}, we have
\begin{multline}\label{step3}
\forall \mathbf{M}\in \mathcal{M}_M: p(b\lvert a,x,\mathbf{M})p(a\lvert x)\\
=p(a\lvert x)\sum_{\lambda}p\left(\lambda\lvert a,x\right)\Tr\left[\rho_\lambda M_b\right]
\end{multline}
Using   Bayes' rule and the fact that $x$ and $\lambda$ are independent\footnote{The independence of $x$ and $\lambda$ follows from the fact that $\lambda$ cannot carry information about the oblivious variable $x$ (the obliviousness comes from no-signaling), i.e.~the assumption of preparation noncontextuality. See e.g.~Refs.~\cite{POM, HT17} for an elaboration.}, one straightforwardly finds that  $p(a\lvert x)p(\lambda\lvert a,x)=p(\lambda)p(a\lvert x,\lambda)$. Inserting this in \eqref{step3}, we recover the outcome statistics obtained from applying $M$ to an unsteerable assemblage \eqref{LHS}. 

Conversely, if the assemblage has a local hidden state model, then for every POVM the outcome statistics reads
\begin{multline}\label{xxx}
\forall \mathbf{M}\in \mathcal{M}_M: p(b\lvert a,x,\mathbf{M})=\\
\frac{1}{p(a\lvert x)}\sum_{\lambda}p(\lambda)p(a\lvert x,\lambda)\Tr\left[\rho_\lambda M_{b}\right].
\end{multline}
From Bayes rule and the independence of $x$ and $\lambda$, we have that $p(\lambda)p(a\lvert x,\lambda)/p(a\lvert x)=p(\lambda\lvert a,x)$. Note that said independence implies preparation noncontextuality. Inserted into Eq.~\eqref{xxx} we find the outcome statistics obtained in a measurement noncontextual model. 
\end{proof}

We have illustrated the theorems in Figure~\ref{Fig}. Notice that theorem~\ref{thm1} and theorem~\ref{thm2} give a characterisation of the ontic variables using quantum theory. Whereas this characterisation is relevant for noncontextual models covering all states or measurements, it would be interesting to see whether such characterisation exists in the case of fragments of quantum theory, i.e.~for noncontextual models covering subsets of states and measurements. 

\begin{figure}
	\centering
	\includegraphics[width=\columnwidth]{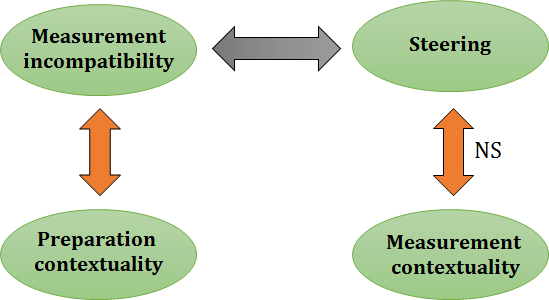}
	\caption{The orange arrows illustrate theorem~\ref{thm1} and theorem~\ref{thm2}. The upper two figures represent theoretical entities which respectively enable the lower two operational features when provided a correct catalyst preparation or measurement procedure. Conversely, the two operational features imply the respective theoretical entities regardless of the underlying (no-signaling, ``NS") procedure. The grey arrow indicates the previously known relation that a set of measurements is incompatible if and only if it enables steering with a proper catalyst state \cite{QV14, Roope2}. It is worth noting that this connection can also be seen as a mapping between the problems in measurement incompatibility and steerability \cite{UB15}.}\label{Fig}

\end{figure}

Also, it is worth noting that a number of works have (in different ways) shown that outcome statistics that violate a Bell inequality is a proof of preparation contextuality \cite{Leifer, HT17, Saha, Sp05}. In Appendix~\ref{LHVtoNC}, we note that this fact follows immediately from ontological models and the no-signaling principle (see also Ref.~\cite{Matt} for similar result)\footnote{This result was shown originally in the unpublished note Ref.~\cite{Barrett}.}.

\section{Noncontextuality inequalities for qubit measurement incompatibility and steering}
We proceed to use the established connection to contextuality to address two relevant problems in measurement incompatibility and steering. I) Can one find a preparation noncontextuality inequality whose violation is both necessary and sufficient for certifying measurement incompatibility for interesting families of measurements? Note that despite theorem~\ref{thm1} this is a nontrivial matter since any \textit{single} test of preparation contextuality is only a sufficient condition for measurement incompatibility (and the full characterisation of all tests of preparation contextuality is a demanding problem). II) Can one find a measurement noncontextuality inequality (in a no-signaling scenario) that for an interesting class of states optimally certifies their steerability? In analogy to the previous, this is nontrivial since any single test of measurement contextuality in a no-signaling scenario is only a sufficient condition for steering.

 To answer these questions, we present a family of correlation inequalities (parameterised by an integer $n\geq 2$) inspired by the works of Refs~\cite{POM, Kerenidis, SpatialSequential}.  Consider a Bell-like (no-signaling) experiment in which two separated observers, Alice and Bob, share parts of a physical system. Alice (Bob) performs  measurements labelled by her (his) uniformly random input $x\in\{0,1\}^{n-1}$ ($y\in\{1,\ldots, n\}$). 
The outcome is denoted by $a\in\{0,1\}$ ($b\in\{0,1\}$).  Alice's measurement procedures are constrained by operational equivalences. That is, her outcome statistics always upholds suitable indistinguishability relations which enable us to consider the statistics of different contexts for her measurements. Specifically, for every bit-string $r\in\{0,1\}^n$, we require that the measurement procedures $\mathbf{M}_{r,0}$ and $\mathbf{M}_{r,1}$ corresponding to a uniform mixing of all $(a,x)$ satisfying $r\cdot \bar{x}=0$ and $r\cdot \bar{x}=1$ respectively (where  $\bar{x}=(a,x+a)$), are  indistinguishable from each other. In quantum theory, this means that 
\begin{equation}\label{op}
\sum_{a,x\lvert r\cdot \bar{x}=0} M_{a\lvert x}=\sum_{a,x\lvert r\cdot \bar{x}=1} M_{a\lvert x}.
\end{equation}
 Note that whenever $r$ has an even number of ones, this condition is always satisfied since $M_{0\lvert x}+M_{1\lvert x}=\openone$. For odd strings $r$, Eq.~\eqref{op} is a nontrivial constraint. Now, let Alice and Bob play a game in which they aim to maximise the probability of finding $a+b=\bar{x}_y\mod{2}$. When Alice is considered the sender of Bob's remotely prepared local states, we can consider the scenario as a test of preparation contextuality. In contrast, when Bob is considered the sender of Alice's remotely prepared local states, we can consider the scenario as a test of preparation and measurement contextuality. In case of either being noncontextual, the average success probability is bounded by 
\begin{equation}\label{NC}
\mathcal{A}_n\equiv \frac{1}{n2^{n-1}}\sum_{x,y} p(a+b=\bar{x}_y\lvert x,y)\leq\frac{n+1}{2n}.
\end{equation}
The proof of this result is a simple modification of the arguments presented in Ref~\cite{POM} and is discussed in Appendix~\ref{AppSpekkens}. A violation of the inequality \eqref{NC} means that Bob's measurements (which are unconstrained) are incompatible (by theorem~\ref{thm1}) and that Alice's local assemblage (prepared by Bob) is steerable (by theorem~\ref{thm2}). We now study the usefulness of the inequality \eqref{NC} for certifying qubit measurement incompatibility and two-qubit steerability. 

For the case of $n=2$ the inequality \eqref{NC} reduces to the Clauser-Horne-Shimony-Holt Bell inequality \cite{CHSH} for which it is known that all pairs of incompatible measurements enable a violation \cite{Wolf}. For $n> 2$ (specifically studying $n=3,\ldots, 7$) we have numerically obtained support (10000 examples for each $n$) for the following conjecture 
\begin{conjecture}\label{conj1}
Every set of $n$ incompatible two-outcome qubit measurements enable a proof of preparation contextuality by a violation of the inequality \eqref{NC}.
\end{conjecture}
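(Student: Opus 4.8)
Since the converse (that a violation of \eqref{NC} certifies incompatibility) already follows from Theorem~\ref{thm1}, what remains is to show that \emph{every} incompatible set $\{A_{b\lvert y}\}_{y=1}^{n}$ of two-outcome qubit measurements admits a shared state and a family of parity-oblivious measurements for Alice---i.e.\ measurements obeying \eqref{op}---with $\mathcal{A}_n>\tfrac{n+1}{2n}$. The plan is constructive. First I would pass to a normal form: after the reduction to the unbiased case one may take $A_{\pm\lvert y}=\tfrac12(\openone\pm\vec m_y\cdot\vec\sigma)$, so that the data of the problem are the Bloch vectors $\vec m_1,\dots,\vec m_n$ with $\|\vec m_y\|\le1$. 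By the GHJW theorem \cite{Gisin,HJW}, fixing the shared state together with Alice's POVMs is equivalent to fixing a parity-oblivious assemblage on Bob's side, so $\mathcal{A}_n$ becomes the maximum of an explicit linear functional (determined by the $A_{b\lvert y}$) over the convex set of parity-oblivious qubit assemblages.

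Next I would choose an ansatz adapted to the $\vec m_y$, guided by the optimal strategy for parity-oblivious multiplexing \cite{POM}: take the shared state maximally entangled and Alice's parity-oblivious measurements projective along ``hyperplane'' directions of the form $\sum_i(-1)^{x_i}\hat u_i$, with the frame $\{\hat u_i\}$ rotated to align with $\{\vec m_y\}$. For such a strategy $\mathcal{A}_n$ collapses to an explicit function of the Bloch vectors---an average over $x$ of norms of signed combinations of the $\vec m_y$---and the conjecture reduces to the geometric statement that this function exceeds $\tfrac{n+1}{2n}$ precisely when $\{\vec m_y\}$ is incompatible; at $n=2$ this geometric statement holds by the result of Ref.~\cite{Wolf}. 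To close the nontrivial (``only if'') direction I would either prove the geometric inequality directly or, more robustly, replace the ansatz by a semidefinite program: the maximum of $\mathcal{A}_n$ over local-hidden-state (equivalently, jointly measurable) models is an SDP whose dual produces noncontextuality/steering inequalities, and the target is to show that for the highly symmetric family of canonical qubit assemblages the dual optimum coincides with \eqref{NC} up to the permutation and sign symmetries of the scenario. Via Theorem~\ref{thm2} and the incompatibility--steering duality \cite{QV14,Roope2} (the grey arrow in Fig.~\ref{Fig}) this is equivalent to the statement that every steerable assemblage of the canonical qubit type $\sigma_{b\lvert y}\propto\openone\pm\vec m_y\cdot\vec\sigma$ violates \eqref{NC}.

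The main obstacle is precisely this ``only if'' direction. Theorem~\ref{thm1} guarantees only that an incompatible set forces the behaviour outside the preparation-noncontextual polytope; but that polytope has many facets, of which \eqref{NC} is a single family, so what must be shown is that a \emph{qubit} behaviour leaving the polytope necessarily leaves it through a facet of the form \eqref{NC}. For $n=2$ this is tractable because Jordan's lemma simultaneously block-diagonalizes any pair of qubit observables into $2\times2$ blocks, collapsing the problem to an essentially one-parameter planar one; for $n\ge3$ there is no analogue of Jordan's lemma for three or more observables and no closed-form characterization of their joint measurability, so the SDP-duality analysis would have to be carried out uniformly in $n$ using only the symmetry of \eqref{NC} and the convex geometry of the qubit joint-measurability problem. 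This is presumably why the statement is offered as a numerically supported conjecture; a proof would most likely also yield, as a byproduct, a closed-form criterion for the joint measurability of $n$ arbitrary binary qubit observables.
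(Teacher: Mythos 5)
The statement you are attempting is a \emph{conjecture}: the paper does not prove it, and offers only numerical evidence (Appendix~\ref{AppNum}: sample random two-outcome qubit POVMs, maximise $\mathcal{A}_n$ over Alice's ensemble by SDP, depolarise the measurements down to the noncontextual bound, and verify by a joint-measurability SDP that the depolarised set is compatible). Your text is likewise not a proof but a research plan, and to your credit you correctly identify the hard ``only if'' direction --- that a qubit behaviour leaving the preparation-noncontextual polytope must leave it through a facet of the form \eqref{NC} --- and acknowledge that you do not close it. So the essential gap is the same one the authors leave open.

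Two of the concrete steps you do assert, however, are themselves unjustified and would need repair even as part of a plan. First, the ``reduction to the unbiased case'' $A_{\pm\lvert y}=\tfrac12(\openone\pm\vec m_y\cdot\vec\sigma)$ does not exist in general: the conjecture (and the paper's sampling) covers biased effects $B_{0\lvert y}=\tfrac12(\alpha_y\openone+\eta_y\vec n_y\cdot\vec\sigma)$ with $\alpha_y\neq 1$, and joint measurability of biased qubit observables is not equivalent to that of their unbiased counterparts, so you cannot discard the bias parameters without argument. Second, the ansatz of a maximally entangled state with Alice measuring along directions $\sum_i(-1)^{x_i}\hat u_i$ for a rigidly rotated orthonormal frame is tailored to the symmetric parity-oblivious-multiplexing configuration of Ref.~\cite{POM}; for generic, non-orthogonal, unequal-length Bloch vectors $\vec m_y$ there is no reason this family is optimal, so the ``explicit geometric function'' you would compare to $\tfrac{n+1}{2n}$ need not equal the true maximum of $\mathcal{A}_n$, and a failure of the geometric inequality would not disprove the conjecture nor would its validity establish it. The $n=2$ anchor via Ref.~\cite{Wolf} is correct (the paper makes the same observation, since \eqref{NC} reduces to CHSH there), but for $n\ge 3$ your proposal, like the paper, leaves the statement as an open conjecture.
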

In Appendix~\ref{AppNum}, we describe the numerical procedure employed to motivate this conjecture. 

Consider now the case of steering. For simplicity, let Alice and Bob share the noisy singlet state $\rho_v=v\ketbra{\psi^-}{\psi^-}+(1-v)\openone/2$, where $\ket{\psi^-}=(\ket{01}-\ket{10})/\sqrt{2}$ for some visibility $v\in[0,1]$. What is the critical value of $v=v_n$ so that Bob can steer Alice using $n$ projective measurements? Although this question is well-studied (see e.g.~Refs~\cite{Bavaresco, Wiseman, Moroder}) an analytical formula is lacking. However Ref~\cite{Bavaresco} presented nearly matching upper and lower bounds on $v_n$ for $n=2,\ldots, 13$ and $n=2,\ldots, 5$ respectively. Using our inequality \eqref{NC}, we have numerically implemented alternating convex searches to find an upper bound on the critical $v_n$ (below which we can no longer find a quantum violation). This returns
\begin{align}\nonumber
& v_2=0.7071 & v_3=0.5774 && v_4=0.5547 \\
& v_5=0.5422 & v_6=0.5270 && v_7=0.5234.
\end{align} 
Interestingly, these numbers coincide precisely with those presented in Ref~\cite{Bavaresco} (up to the number of decimals presented in Ref~\cite{Bavaresco}). This motivates the conjecture
\begin{conjecture}\label{conj2}
	The inequality \eqref{NC} is a tight steering inequality for the noisy singlet state under $n$ projective measurements.
\end{conjecture}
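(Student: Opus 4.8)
The plan is to establish Conjecture~\ref{conj2} by matching, for each $n$, an upper bound on the critical visibility $v_n$ obtained from the quantum violation of \eqref{NC} against a lower bound obtained by explicitly constructing a local hidden state model for $\rho_{v_n}$ under $n$ optimally chosen projective measurements. Tightness of a steering inequality for a specific family of states is always a two-sided affair: one must show (i) that quantum mechanics violates \eqref{NC} for all $v>v_n$ with a suitable choice of Bob's measurements and Alice's measurement contexts, and (ii) that at $v=v_n$ the assemblage prepared by Bob's $n$ projective measurements on $\rho_{v_n}$ is unsteerable, hence by Theorem~\ref{thm2} cannot violate any measurement-noncontextuality inequality, \eqref{NC} in particular. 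The numerics already reported give strong evidence for (i); the conceptual content of the conjecture is really (ii) together with the claim that the two bounds coincide.

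For the quantum side, I would first solve the optimization of $\mathcal{A}_n$ over Bob's POVMs and the shared state restricted to the $\rho_v$ family. By the symmetry of the singlet under $U\otimes U^*$ one expects Bob's optimal measurements to be $n$ projective qubit measurements whose Bloch vectors are arranged symmetrically (for small $n$, the vertices of a regular polygon in a plane, or a platonic-type configuration), and Alice's $2^{n-1}$ measurement settings to be the ``parity'' measurements dictated by the operational constraints \eqref{op}, realized as projective qubit measurements along the corresponding sums of Bob's directions. Plugging these into the Born rule reduces $\mathcal{A}_n$ to an explicit trigonometric sum whose value, as a function of $v$, crosses the noncontextual bound $(n+1)/(2n)$ exactly at $v=v_n$; this yields the closed form for the upper bound. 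The main work here is verifying that this symmetric ansatz is actually optimal — I would argue this via a group-averaging (twirling) argument over the symmetry group of the measurement configuration, showing that any optimal strategy can be symmetrized without decreasing $\mathcal{A}_n$.

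For the unsteerability side, I would exhibit a local hidden state model \eqref{LHS} for the assemblage $\{\sigma_{a|x}\}$ with $\sigma_{a|x}=\Tr_{\mathrm A}[A_{a|x}\otimes\openone\,\rho_{v_n}]$, where the $A_{a|x}$ are Bob's $n$ optimal projective measurements. The natural candidate is the ``deterministic'' LHS construction tailored to a fixed finite set of measurement directions: take $\lambda$ ranging over a suitable (possibly continuous) distribution of Bloch vectors, $\rho_\lambda$ the corresponding pure qubit states, and $p(a|x,\lambda)$ the sign of the projection of $\lambda$ onto the $x$-th direction; one then tunes the mixing weight so that the model reproduces exactly $\sigma_{a|x}$ at the visibility $v_n$ picked out in step (i). This is precisely the kind of model used in Refs.~\cite{Bavaresco, Wiseman} to certify the lower bounds on $v_n$, so the content of the claim is that the \emph{same} threshold is reached — i.e. that \eqref{NC} is not merely \emph{a} test but an \emph{optimal} test among all measurement-noncontextuality inequalities for this restricted scenario.

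The main obstacle I anticipate is exactly the gap between "this inequality is violated down to $v_n$" and "no construction does better, i.e. $v_n$ is genuinely critical for $n$ projective measurements." Proving optimality of the LHS threshold for general $n$ is known to be hard — that is why Ref.~\cite{Bavaresco} only has matching bounds for $n\le 5$ — and the coincidence of our numbers with theirs for $n\le 7$ is suggestive but not a proof. A clean resolution would require either a general construction of the optimal symmetric LHS model (the polytope of local-deterministic assemblages grows combinatorially in $n$, so this seems to demand a new idea, perhaps exploiting the $\mathrm{SU}(2)$ covariance to reduce to a one-parameter semidefinite feasibility problem) or a duality argument pairing the LHS semidefinite program with the measurement-noncontextuality LP whose optimal inequality turns out to be \eqref{NC}. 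Absent that, I would at minimum present the matching bounds for the cases $n\le 5$ rigorously and leave the general $n$ as the conjecture stated.
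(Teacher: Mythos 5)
The statement you are addressing is labelled a \emph{conjecture} in the paper, and the paper offers no proof of it: the only support given is numerical (alternating convex searches over states and measurements whose thresholds $v_2,\dots,v_7$ happen to coincide, to the quoted precision, with the values reported in Ref.~\cite{Bavaresco}), and the authors explicitly state that a conclusive proof would be interesting. Your proposal correctly identifies the two-sided structure that a genuine proof would need --- (i) a quantum strategy violating \eqref{NC} for all $v$ above threshold, and (ii) a local hidden state model at the threshold, with the two thresholds shown to coincide --- and you are right that (ii), via Theorem~\ref{thm2}, is what would certify that no measurement-noncontextuality test can do better. You are also candid that you cannot close this gap, which is the honest assessment: your text is a research programme, not a proof, and in that respect it neither falls short of nor exceeds what the paper establishes.

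Two concrete points where your fallback position is weaker than you suggest. First, you propose to ``at minimum present the matching bounds for the cases $n\le 5$ rigorously,'' but Ref.~\cite{Bavaresco} only provides \emph{nearly} matching upper and lower bounds on $v_n$ (exact coincidence of an LHS construction with a steering-detection threshold is known rigorously essentially only for $n=2$, where \eqref{NC} reduces to CHSH and $v_2=1/\sqrt{2}$, and for the standard $n=3$ mutually-unbiased configuration giving $1/\sqrt{3}$); so even the restricted claim is not rigorously settled by citing existing bounds. Second, your optimality argument for the quantum side (twirling over the symmetry group of a polygonal/platonic measurement configuration) presumes that the optimal configuration has that symmetry, which is precisely the part that is not known for general $n$ --- the see-saw numerics in the paper make no such symmetry assumption, and the optimal Bloch-vector arrangements for $n\ge 4$ are not regular polytopes in general. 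If you wish to contribute something beyond the paper, the most tractable target is a rigorous proof for $n=2$ and $n=3$, where both the quantum optimum and the LHS threshold are analytically known; for general $n$ the conjecture should remain stated as such.
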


Finding a conclusive proof of conjectures \ref{conj1} and \ref{conj2} would be interesting.  We remark that although the above considerations are straightforwardly analysed with a computer, the criterion \eqref{NC} can be treated in a  fully analytical manner.

\section{Discussion}
We have shown that every set of incompatible measurements and every steerable assemblage can be operationally certified as nonclassical in a test of operational contextuality, and that the latter also imply the formers. A direct consequence is that problems of joint measurability and steering can be viewed through the lens of contextuality, as we examplified through our conjectures. In this sense, our results bridge the two research directions of quantum measurements and quantum steering with the line of research focused on quantum contextuality. 

Moreover, since tests of operational contextuality only rely on weak characterisation of the experimental devices \cite{Mazurek}, our results can also be considered as semi device-independent certificates of measurement incomaptibility and steering. Naturally, fully device-independent certificates are found by violating a Bell inequality. However, in addition to such tests being experimentally demanding, it is importantly also the case that not all incompatible measurements nor all steerable ensembles violate any Bell inequality \cite{Steering, Tamas, Flavien}. This makes tests of operational contextuality relevant for practical considerations when no fully device-independent certificate is either possible or known. 

\textit{Acknowledgements.---} We thank Alastair Abbott, Costantino Budroni, Matthew Pusey, Tom Bullock, Chau Nguyen, Marco Quintino and Leonardo Guerini for comments. This work was supported by the Swiss National Science Foundation (Starting grant DIAQ, NCCR-QSIT) and by the Finnish Cultural Foundation.

\appendix

\section{Bell nonlocality implies preparation contextuality}\label{LHVtoNC}
We give a simple argument for every probability distribution that violates a Bell inequality also being a proof of preparation contextuality (see also Ref.~\cite{Matt}). We show this immediately from ontological models supplemented with the no-signaling principle encountered in Bell inequality tests.
%\begin{center}
%	\textit{If a distribution $p(a,b|x,y)$ obtained in a Bell inequality test is nonlocal, then it is also preparation contextual and measurement contextual.}
%\end{center}

To see this, we write a general ontological model for a Bell experiment as
\begin{equation}
p(a,b|x,y)=\sum_{\lambda} p(a|x,y)p(\lambda|a,x)p(b|y,\lambda).
\end{equation}
If we also impose no-signaling, then Alice's local marginals are independent of Bob's input. Therefore,
\begin{equation}\label{ontic}
p(a,b|x,y)=\sum_{\lambda}p(a|x)p(\lambda|a,x)p(b|y,\lambda).
\end{equation}
Bayes' rule together with the independence of $x$ and $\lambda$ give that $p(a|x)p(\lambda|a,x)=p(\lambda)p(a|x,\lambda)$. Inserting this into Eq.~\eqref{ontic}, we obtain 
\begin{equation}
p(a,b|x,y)=\sum_{\lambda}p(\lambda)p(a|x,\lambda)p(b|y,\lambda).
\end{equation}
This is a local hidden variable model, i.e.~the notion of classicality in Bell inequality tests. The assumption of preparation noncontextuality is enforced due to the assignment of the same ontic-state distribution for the preparation procedures corresponding to the remotely prepared state on Bob's side when averaged over Alice's outcomes, i.e.~the principle of no-signaling. Therefore, whenever $p(a,b|x,y)$ has no local hidden variable model, it also has no preparation noncontextual model. 

\section{Noncontextual bound in inequality \eqref{NC}}\label{AppSpekkens}
In the main text, we considered a scenario in which separated parties Alice and Bob share a state and perform local measurements with binary outcomes $a,b\in\{0,1\}$. Alice's measurement settings are labelled by a bit-string $x\in\{0,1\}^{n-1}$ and Bob's measurement settings are labelled by $y\in\{1,\ldots,n\}$. Alice and Bob aim to satisfy the relation $a+ b=\bar{x}_y\mod{2}$ where $\bar{x}=(a,a+ x)$ is an $n$-bit string. The notation $\bar{x}_y$ labels the $y$'th bit in the string $\bar{x}$. Their average success probability is
\begin{equation}\label{NC2}
\mathcal{A}_n\equiv \frac{1}{n2^{n-1}}\sum_{x,y} p(a+ b=\bar{x}_y\lvert x,y).
\end{equation}
Alice and Bob are restricted by two constraints. Firstly, they obey the no-signaling principle. This means that the preparations of Alice on Bob's side (denoted $\mathbf{P}_{a,x}$), effectively  achieved  by a local measurement on her system, realise the same preparation in different contexts. That is, the following operational equivalences hold; $\sum_a \mathbf{P}_{a,x}\sim \sum_a \mathbf{P}_{a,x'}$. The analogous holds in the other direction, i.e.~by the preparations of Bob on Alice's side achieved by him locally measuring his system. Secondly, Alice's measurements are required to uphold certain operational equivalences. In quantum theory, these are written 
\begin{equation}\label{op2}
\sum_{a,x\lvert r\cdot \bar{x}=0} M_{a\lvert x}=\sum_{a,x\lvert r\cdot \bar{x}=1} M_{a\lvert x},
\end{equation}
for every $n$-bit string $r\in\{0,1\}^n$ with at least two instances of $'1'$. For clarity, we examplify the case of $n=3$. There exists eight three-bit strings of which four have at least two instances of $'1'$.  Those are $r=011$, $r=101$, $r=110$ and $r=111$. For each $r$ we have the relation in Eq.~\eqref{op2}. In case of, for example, $r=011$ we find
\begin{equation}
M_{0|00}+M_{1|00}+M_{0|11}+M_{1|11}=M_{0|01}+M_{1|01}+M_{0|10}+M_{1|10}.
\end{equation} 
However, this is trivially satisfied since $\forall x: M_{0|x}+M_{1|x}=\openone$. Similarly, one finds that the constraint \eqref{op2} is trivial also for $r=101$ and $r=110$. However, for $r=111$ we obtain 
\begin{equation}
M_{0|00}+M_{0|11}+M_{1|01}+M_{1|10}=M_{0|01}+M_{0|10}+M_{1|00}+M_{1|11},
\end{equation}
which is a nontrivial constraint.

Imagine now that instead of performing local measurements on a shared state, Alice directly prepares the would-have-been post-measurement states of Bob's system (labelled by the pair $(a,x)$) and sends them to Bob, who then measures the system and records $b\in\{0,1\}$. This represents a prepare-and-measure scenario in which Alice has $2^n$ inputs $(a,x)$ with some prior distribution $p(a,x)=p(a|x)/2^{n-1}$. %On average the score is
%\begin{equation}
%\frac{1}{n2^{n-1}}\sum_{a,x,y} p(a|x)p(b=(a,x)_y|x,y).
%\end{equation}
Alice's preparations are required to satisfy the operational equivalence which in quantum theory reads
\begin{equation}\label{rep}
\forall r: \quad \sum_{a,x\lvert r\cdot \bar{x}=0} p(a|x)\rho_{a,x}=\sum_{a,x\lvert r\cdot \bar{x}=1} p(a|x)\rho_{a,x}.
\end{equation}
Notice first that in the original scenario, every assemblage prepared by Alice on Bob's side can also be directly sent in this prepare-and-measure model; simply define $\rho_{a,x}=\Tr_\text{A}\left[M_{a|x}\otimes \openone \rho\right]/\Tr\left[M_{a|x}\otimes \openone \rho\right]$, and the prior distribution as $p(a|x)=\Tr\left[M_{a|x}\otimes \openone \rho\right]$. Conversely, every ensemble that Alice can communicate to Bob in the prepare-and-measure scenario can also be realised in the original scenario via local measurements on an entangled state and classical communication. This follows from the Gisin-Hughston-Josza-Wootters theorem \cite{Gisin, HJW} and the fact that Eq.~\eqref{rep} enforces a no-signaling-like preparation ensemble.

In Ref.~\cite{POM} it was shown that when $p(a|x)=1/2$, the considered prepare-and-measure scenario serves as the following test of preparation contextuality; the inequality
\begin{equation}
\frac{1}{n2^{n-1}}\sum_{a,x,y}p(a|x)p(b=(a,x)_y|a,x,y)\leq \frac{n+1}{2n}
\end{equation}
holds for every preparation noncontextual model. Moreover, it is a trivial modification of the arguemtns of Ref.~\cite{POM} to show that the same bound holds regardless of the prior distribution $p(a|x)$. Therefore, due to the  between the prepare-and-measure scenario and the original scenario, it also holds that
\begin{equation}
\mathcal{A}_{n}\leq \frac{n+1}{2n}
\end{equation}
in a preparation noncontextual model, in which we view Alice as effectively preparing the local states of Bob. 

Moreover, in the original scenario, we can equally well consider Bob as the effective sender of Alice's local states. If we impose measurement noncontextuality, the response function of Alice takes no regard to the different contexts of her measurement (related to $r$). Note that preparation noncontextuality is still present due to Alice and Bob being no-signlaing.

\section{Numerical evidence in support of conjecture \ref{conj1}}\label{AppNum}

The numerical evidence behind the conjecture \eqref{conj1} was obtained as follows. We used the prepare-and-measure variant (discussed in the previous Appendix, based on Ref~\cite{POM}) for the numerics. We sample a set of $n$ random two-outcome qubit POVMs $\mathcal{M}=\{B_{b|y}\}_{y=1}^n$. The sampling is done by using the Bloch sphere parameterisation of the most general two-outcome qubit measurement, i.e.
\begin{align}
\forall y: \hspace{5mm} & B_{0| y}=\frac{\alpha_y\openone+\eta_y \vec{n}_y\cdot \vec{\sigma}}{2}\\
 & B_{1| y}=\frac{(2-\alpha_y)\openone-\eta_y \vec{n}_y\cdot \vec{\sigma}}{2}
\end{align}
for some random unit vectors $\vec{n}_y$, some random numbers $\eta_y\in[0,1]$ and some random numbers $\eta_y\leq \alpha_y\leq 2-\eta_y$. 

For the sampled $\mathcal{M}$, we evaluate the largest possible value of the witness $\mathcal{A}_n$ via a semidefinite program optimising over the state ensemble of Alice. This returns the maximal value of $\mathcal{A}_n(\mathcal{M})$ attainable with $\mathcal{M}$. We denote the optimal ensemble returned by the semidefinite program by $\mathcal{P}$. Provided that  $\mathcal{A}_n(\mathcal{M})$ violates the inequality Eq.~\eqref{NC} (in its prepare-and-measure form), we construct new measurements $B'_{b|y}=vB_{b|y}+(1-v)\openone/2$ where $v\in[0,1]$. We write $\mathcal{M}'=\{B'_{b|y}\}$. For the states $\mathcal{P}$ we have that 
\begin{equation}
\mathcal{A}_n(\mathcal{M}',\mathcal{P})=v\mathcal{A}_n(\mathcal{M})+(1-v)\mathcal{A}_n\left(\{\openone/2\},\mathcal{P}\right).
\end{equation}
We choose the value of $v$ for which $\mathcal{A}_n(\mathcal{M}',\mathcal{P})$ saturates the noncontextual bound in Eq.~\eqref{NC}, i.e.
\begin{equation}
v=\frac{\mathcal{C}_n-\mathcal{A}_n\left(\{\openone/2\},\mathcal{P}\right)}{\mathcal{A}_n(\mathcal{M})-\mathcal{A}_n\left(\{\openone/2\},\mathcal{P}\right)},
\end{equation}
where $\mathcal{C}_n=(n+1)/(2n)$ is the noncontextual bound. Then, via a semidefinite program, we check whether  $\mathcal{M}'$ is jointly measurable. Evidently, any perturbation of $v$ to the positive renders $\mathcal{M}'$ incompatible since it implies a violation of the preparation noncontextuality inequality. We have repeated the procedure 10000 times (post-selected on the cases in which $\mathcal{A}_n(\mathcal{M})$ constitutes a proof of preparation contextuality) for $n=3,4,5,6, 7$ respectively. Without exception, we have found that $\mathcal{M}'$ is jointly measurable.

\end{document}